\documentclass{article}

\usepackage{etex}
\usepackage{geometry}
\usepackage{makeidx}
\makeindex{}

\usepackage[utf8]{inputenc}
\usepackage[T1]{fontenc}
\usepackage[nottoc,notlof, notlot,section]{tocbibind} 

\usepackage{xparse}

\usepackage{appendix}
\usepackage{afterpage}
\usepackage{multicol}
\usepackage{color}
\usepackage[usenames]{xcolor}
\usepackage{xspace}
\usepackage{setspace}
\usepackage{multido}
\usepackage[normalem]{ulem}
\usepackage{soul}
\usepackage{url}

\usepackage{color}
\usepackage{csquotes}
\usepackage[inline]{enumitem}
\usepackage{wrapfig}
\usepackage{framed}
\usepackage[hypcap]{caption}
\usepackage{subcaption}
\captionsetup{compatibility=false}
\usepackage{tikz}
\usetikzlibrary{positioning,shadows,arrows,automata}
\usepackage{graphics}
\usepackage{graphicx}

\usepackage{multirow}
\usepackage{makecell}
\usepackage{etoolbox}
\usepackage{times}
\usepackage{mathtools}
\usepackage{amsmath}
\usepackage{mathtools}
\usepackage{amssymb} 
\usepackage{amsthm}
\usepackage{centernot}

\usepackage[cyr]{aeguill}
\usepackage{tipa}
\usepackage{stmaryrd} 

\usepackage[boxruled,linesnumbered,noend]{algorithm2e}
\usepackage{algorithmic}
\usepackage{complexity}
\newclass{\MSO}{MSO}
\newclass{\WMSO}{WMSO}
\newclass{\SO}{SO}
\usepackage{latexsym} 
\usepackage{gensymb}


\newcommand{\ignore}[1]{}
\let\originalleft\left
\let\originalright\right
\renewcommand{\left}{\mathopen{}\mathclose\bgroup\originalleft}
\renewcommand{\right}{\aftergroup\egroup\originalright}


\newcommand{\mb}{\mathbb}






\newcommand{\bigO}[1]{O\left(#1\right)}


\DeclareDocumentCommand{\wordToReal}{ O{b}  m}{\wordToNumber[#1]{#2}^{\R}}
\DeclareDocumentCommand{\wordToFractional}{ O{b}  m}{\wordToNumber[#1]{#2}^{\fra}}
\DeclareDocumentCommand{\wordToNatural}{ O{b}  m}{\wordToNumber[#1]{#2}^{\nat}}
\DeclareDocumentCommand{\wordToNumber}{ O{b}  m}{
  \left[{#2}\right]_{#1}
}
\DeclareDocumentCommand{\wordToTupleReal}{ O{b} O{d} m}{\wordToNumber[#1][#2]{#3}^{\R}}
\DeclareDocumentCommand{\wordToTupleFractional}{ O{b} O{d} m}{\wordToNumber[#1][#2]{#3}^{\fra}}
\DeclareDocumentCommand{\wordToTupleNatural}{ O{b} O{d} m}{\wordToNumber[#1][#2]{#3}^{\nat}}
\DeclareDocumentCommand{\wordToTupleNumber}{ O{b} O{d} m}{
  \def\temp{#2}\ifx\temp\empty
  \left[#3\right]_{#1}
  \else
  \left[#3\right]_{#1,#2}
  \fi
}
\DeclareDocumentCommand{\logicToTupleNumber}{ O{d} m}{
  \left[#2\right]_{#1}
}
\DeclareDocumentCommand{\autToTupleNumber}{ O{b} O{d} m}{
  \def\temp{#2}\ifx\temp\empty
  \left[#3\right]_{#1}
  \else
  \left[#3\right]_{#1,#2}
  \fi
}

\newcommand{\realDot}{\star}

\newcommand{\set}[1]{\left\{ #1\right\}}


\newcommand{\N}{\mathbb N{}}
\newcommand{\nat}{I}

\DeclareDocumentCommand{\digitSet}{ O{b}}{\Sigma_{#1}}
\DeclareDocumentCommand{\digitSetDim}{ O{b} O{d}}{\Sigma_{#1,#2}}
\DeclareDocumentCommand{\digitDotSetDim}{ O{b} O{d}}{\Sigma_{#1,#2}\cup\set{\realDot}}

\renewcommand{\R}{\ensuremath{\mb{R}}}




\newcommand{\card}[1]{|#1|}


\DeclareMathOperator{\fra}{frac}







\usepackage{xstring}
\newcommand{\appendExp}[1]{
  \hspace{.5cm}
  \left(#1\right)
}
\newcommand{\anot}[3]{
  &#3&#2&
  \def\temp{#1}\ifx\temp\empty
  \else\appendExp{#1}
  \fi
}



\usepackage{aliascnt}
\usepackage{hyperref}

\newtheorem{theorem}{Theorem}[section]

\newaliascnt{lemma}{theorem}  
\newtheorem{lemma}[lemma]{Lemma}  
\aliascntresetthe{lemma}

\newaliascnt{corollary}{theorem}  
  
\aliascntresetthe{corollary}

\theoremstyle{definition}

\newaliascnt{example}{theorem}  
  
\aliascntresetthe{example}

\newaliascnt{definition}{theorem}  
  
\aliascntresetthe{definition}

\newaliascnt{notation}{theorem}  
  
\aliascntresetthe{notation}

\newaliascnt{proposition}{theorem}  
  
\aliascntresetthe{proposition}

\newaliascnt{property}{theorem}  
  
\aliascntresetthe{property}

\theoremstyle{remark}
\newaliascnt{remark}{theorem}  
  
\aliascntresetthe{remark}


\newcommand{\quo}[1]{\text{``}#1\text{''}}





\newcommand{\type}{\emph}
\newcommand{\param}{\textbf}
\newcommand{\var}{\texttt}
\begin{document}
\date{} \title{(Quasi-)linear time algorithm to compute LexDFS, LexUP
  and LexDown orderings} \author{Arthur Milchior}

\maketitle
\begin{abstract}
  We consider the three graph search algorithm LexDFS, LexUP and
  LexDOWN. We show that  LexUP orderings can be computed in
  linear time by an algorithm similar to the one which compute
  LexBFS. Furthermore, LexDOWN orderings and LexDFS orderings can
  be computed in time $\left(n+m\log m\right)$ where $n$ is the number
  of vertices and $m$ the number of edges.
\end{abstract}
\section{Introduction}
A graph search is a mechanism for systematically visiting the vertices
of a graph. Deep-First Search (DFS) and Breadth-First Search (BFS)
have been studied for decades (see e.g. \cite{Cormen}).  Those two
graph searches can be computed in linear time. A particular kind of
BFS, the Lexicographical BFS (LexBFS), has then been introduced in
\cite{DBLP:journals/siamcomp/RoseTL76}. And by similarity, the
Lexicographical DFS (LexDFS) has been studied in
\cite{DBLP:journals/siamdm/CorneilK08}. And then LexUP and LexDOWN in
\cite{dusart:tel-01273352}.

A LexBFS ordering of a graph $G$ is a possible output of a LexBFS
search applied to $G$. While the LexBFS al algorithm runs in time
$\bigO{nm}$ where $n$ is the number of vertices and $m$ the number of
edges, a LexBFS ordering can be computed in time $\bigO{n+m}$. LexDFS,
LexUP and LexDOWN also run in time $\bigO{nm}$. We show that a LexUP
ordering can be computed in linear time by an algorithm similar to the
one which compute a LexBFS ordering. Furthermore, we prove that a
LexDOWN ordering and a LexDFS ordering can be computed in time
$\bigO{n+m \log m}$.

Definitions are given in \autoref{sec:def}. The four graph search
algorithms considered in this paper are given in
\autoref{sec:search}. An efficient algorithm to compute LexDFS and
LexDOWN ordering are given in \autoref{sec:DFS-DOWN}. Finally,
efficient algorithm is given to compute a LexUP ordering in
\autoref{sec:BFS-UP}.
\section{Definition}\label{sec:def}
Definitions used in this paper are now introduced. Most of those
definitions are standard.  Let $\N$\index{N@$\N$} be the set of
non-negative integer.  For $A$ a finite set, $\card
A$\index{A@$\card A$} denotes the cardinality\index{Cardinality} of
$A$. 

A word\index{Word} on $\N$ is a sequence $a_{1}\dots a_{n}$, with
$a_{i}\in \N$. The empty word\index{Empty word} is denoted
$\epsilon$\index{epsilon@$\epsilon$}. For $\mathbf a=a_{1}\dots a_{n}$
and $\mathbf b=b_{1}\dots b_{m}$ two words over $A$, it is said that
$\mathbf a$ is (lexicographically) smaller\index{Lexicographically
  smaller} than $\mathbf b$ if there exists $i\le \min(n,m)$ such
that, for all $1\le j\le i$, $a_{j}=b_{j}$, and (either $i=n<m$ or
$a_{i+1}<b_{i+1}$).

\subsection{Graph}
A (undirected) graph $G$ with a source\index{(undirected) graph with a
  source} is a 3-tuple $(V,E,s)$\index{$(V,E,s)$} where $V$ is a
finite set, $E$ is a set of subsets of $V$ whose elements's
cardinality is 2 and $s\in V$.  The elements of $V$ are called
vertices\index{Vertices}. The elements of $E$ are called
edges\index{Edges}.  The vertex $s$ is called the source\index{Source
  of a graph}.

A vertex $v$ is said to be a neighbor\index{neighbor} of $w$ if
$\set{v,w}\in E$. The neighborhood of a vertex\index{Neighborhood of a
  vertex} $v$ is the set of neighbor of $w$, it is denoted
$N(v)$\index{N@$N(v)$}. Formally, $N(v)=\set{w\mid \set{v,w}\in E}$. The
degree of $v$, denoted $d(v)$\index{dv@$d(v)$}, is the cardinality of
its neighbourhood. Formally, $d(v)=\card{N(v)}$. 

Two vertices $v,w\in V$ are said to be connected\index{Connected
  vertices} if there exists a sequence $v=v_{0},\dots, v_{p}=w$ such
that, for all $0\le i<p$, $\set{v_{i},v_{i+1}}\in E$. A graph is said
to be connex\index{Connex graph} if all pair of distinct vertices are
connected. 



\subsection{Data structures}
In this section, we list the data structures used in this paper. We
list the operation those data structures admit, and their time
complexity. All of those notions are standard (see
e.g. \cite{Cormen}).

In this paper, each type is represented as \type{type}, each variable
is represented as \var{var} and each function of parameter of an
object $o$ is represented as $o$.\param{param}.

It is assumed thourought this paper that \type{integer}s can be incremented
and compared in constant time. During execution of the algorithm of
this paper of a graph $(V,E)$, all \type{integer} variables are interpreted
by a number whose absolute value is at most $\max(\card V,2\card
E)$. Hence, the constant time assumption is relatively safe. Assing a
value $x$ to a variable $v$ is denoted $v:=x$ and is assumed to take
constant time.
\paragraph{Arrays}

It is assumed in this paper that \type{array}s are created in time linear to
their numbers of elements.  The elements of an \type{array} $A$ with $n$
elements are numbered from $1$ to $n$.  The $i$-th element of $A$ is
denoted $A[i]$, and can be read and assigned in constant time.

\paragraph{Doubly linked lists}

In this paper, all lists are assumed to be \type{doubly-linked
  list}s. A \type{doubly-linked list} of elements of type $t$ is a sequence
of \type{node}s, with direct access to its first and last \type{node}s. Each \type{node}
contains a value of type $t$. Each \type{node} has also a direct access to
its list, to the preceding and following \type{node}s.  A \type{doubly-linked list}
$l$ admits the following constant-time operations:
\begin{itemize}
\item Access to its first \type{node}: $l.\param{first}$.
\item Access to its last \type{node}: $l.\param{last}$.
\item Adding a \type{node} $c$ to the head of $l$: $l.\param{add-first}(c)$.
\item Adding a \type{node} $c$ to the end of $l$: $l.\param{add-last}(c)$.
\end{itemize}
A list with $n$ \type{node}s can be sorted in time $\bigO{n.\log(n)}$,
assuming that the comparison of two \type{node}s of the list can be done
in constant time: $l.\param{sort}$. The order will always be clear in
the algorithms of this paper.

A \type{node} $e$ of a \type{doubly-linked list} $l$ admits the following
operations:
\begin{itemize}
\item access to the preceding \type{node}: $e.\param{pred}$,
\item access to the following \type{node}: $e.\param{next}$,
\item access to the value at position $e$: $e.\param{value}$,
\item inserting a value $i$ of type $t$ in a new \type{node} after $e$: $e.\param{add-after}(i)$ and
\item inserting a value $i$ of type $t$ in a new \type{node} before $e$:
  $e.\param{add-before}(i)$ and
\item removing $e$: $e.\param{remove}$.
\end{itemize}
Note that the first value of type $t$ of a \type{list} $l$ is
$l.\param{first}.\param{value}$ and not $l.\param{first}$. Indeed,
$l.\param{first}$ is a \type{node} and not a value of type $t$.

\paragraph{Graphs}
A \type{graph} $G$ is represented as an \type{array} of size $n$. The $i$-th element
of the \type{array} contains the list of neighbors of $v_{i}$.  Formally,
$N(i)$ should be represented as $G[i]$, however, $N(i)$ is used in the
algorithms of this paper for the sake of the readability.

\section{Graph search algorithm}\label{sec:search}
In this section, the four graph search algorithms considered in this
paper are considered. A graph search algorithm is an algorithm as in
\autoref{def:search}. Note that the standard definition of
graph search algorithms is more general than the one used in this
paper.
\begin{algorithm}[!h]
  \caption{Definition of a graph search algorithm}\label{def:search}
  \KwIn{An undirected \type{graph} $G=(V,E,s)$ with \var{n} vertices} \KwOut{an
    ordering $\sigma$ of the vertices of $G$}
  \lnl{search:epsilon}assign .\param{label} $\epsilon$ to all vertices\;
  \lnl{search:infty-s}assign .\param{label} $[\infty]$ to $s$\;
  \lnl{search:loop}\ForEach{\var{i} from $1$ to \var{n}}{
    \lnl{search:choose}pic an \var{unnumbered} \type{vertex} \var{vertex} with lexicographically maximal .\param{label}\;
    \lnl{search:choose}$\sigma$[\var{i}]:=\var{vertex}\;
    \lnl{search-def-loop-neighb}\ForEach{\var{unnumbered} vertex  \var{\var{neighb}}$\in$ N(\var{vertex})}{
      \lnl{search:update}update \var{\var{neighb}}.\param{label}\; }
  }
  Output $\sigma$\;
\end{algorithm}
The only difference beween the four graph search algorithms considered
in this paper appears in Line \ref{search-def-loop-neighb} of
\autoref{def:search}.  

In this paper, the label is always a \type{list} of \type{integer}s. Each update
always takes constant time and add exactly an \type{integer} to the label.
The time complexity of this algorithm is now considered.
\begin{lemma}
  Let $G=(V,E,s)$ a graph with $n$ vertices and $m$ edges.  Assuming
  the update consists in adding an \type{integer} in the front or in
  the rear of the list, the time complexity of \autoref{def:search} is
  $\bigO{nm}$.
\end{lemma}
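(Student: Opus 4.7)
The plan is to split the total cost into three parts: the initialization, the updates on \autoref{search:update}, and the selection of the lexicographically maximal label on \autoref{search:choose}.

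First, I would handle the easy costs. The initialization assigning an empty label to each vertex and $[\infty]$ to the source is $O(n)$. For the update step, observe that over the whole run of the algorithm, each edge $\{u,v\}$ contributes at most twice to \autoref{search:update}: once when $u$ becomes numbered (and $v$ may still be unnumbered), and once symmetrically. Since each update is assumed to be a constant-time front/rear insertion into a list, the total time spent on updates is $O(m)$. As a corollary, the sum of the lengths of all labels at any point of the execution is bounded by $2m$, and the length of the label of any individual vertex $v$ never exceeds $d(v)$.

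Next I would analyse the cost of selecting the lexicographically maximal label at each of the $n$ iterations. The natural implementation is a linear scan over the (at most $n$) unnumbered vertices, maintaining the current best candidate and comparing it to the next label. A lexicographic comparison of two labels of lengths $\ell_1,\ell_2$ takes time $O(\min(\ell_1,\ell_2)+1)$, which I would upper-bound by $O(\ell_2+1)$ where $\ell_2$ is the length of the newly considered label. Summing over all unnumbered vertices in one iteration gives at most $O\bigl(n+\sum_{v}\lvert\mathrm{label}(v)\rvert\bigr)=O(n+m)$ by the bound from the previous paragraph. Doing this $n$ times yields $O(n^2+nm)$, which is $O(nm)$ whenever $m\geq n-1$ (in particular, this holds for connected graphs, which is the intended setting; for the degenerate case of isolated vertices the bound $O(n+nm)$ is trivially absorbed).

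Finally I would sum the three contributions: initialization $O(n)$, updates $O(m)$, and selection $O(nm)$, giving the announced $O(nm)$. The only mildly delicate step is the comparison analysis; the main obstacle is convincing the reader that lexicographic comparison on our doubly-linked list labels really costs $O(\min(\ell_1,\ell_2)+1)$, which follows directly from the constant-time $\param{next}$ access on list nodes.
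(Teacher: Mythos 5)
Your proposal is correct and follows essentially the same route as the paper: bound the total label length by $2m$, charge the updates $O(m)$ overall, and let the per-iteration scan for the maximal label dominate at $O(n+m)$ per step. You are in fact slightly more careful than the paper, which charges only $O(m)$ per selection step and silently absorbs the $O(n^2)$ scanning term that you handle explicitly via the $m\geq n-1$ remark.
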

\begin{proof}
  Let us first consider the labels. The label of a vertex $v$ contains
  at most $\card{N(v)}$ elements. Hence the sum of the length of the
  label is at most $2m$.
  
  Lines \ref{search:epsilon}, \ref{search:infty-s} are executed once
  and in constant time. Hence their time cost is $\bigO{1}$.  Each
  execution of Line \ref{search:choose} may have to read the entire
  labels of each vertex. Finding the maximal label then cost
  $\bigO{m}$-times. Since this line is executed $n$ times, this Line
  costs $\bigO{nm}$-times.  Line \ref{search:choose} can be executed
  in constant time, and is executed $n$ times, hence it costs
  $\bigO{n}$ time. Finally, each execution of line \ref{search:update}
  takes constant time. And this line is executed once for each
  $1\le i\le n$ and $n\in N(v_{i})$, hence it is executed
  $2m$-times. Thus, it costs $\bigO{m}$ time.

  Finally, the whole algorithm runs in time $\bigO{nm}$.
\end{proof}
For $\mathcal A$ a graph search algorithm, an $\mathcal A$
ordering\index{A ordering@$\mathcal A$ ordering} of $G$ is a possible
output of $\mathcal A$ on $G$.
\tikzstyle{vertex}=[circle,draw,color=black]

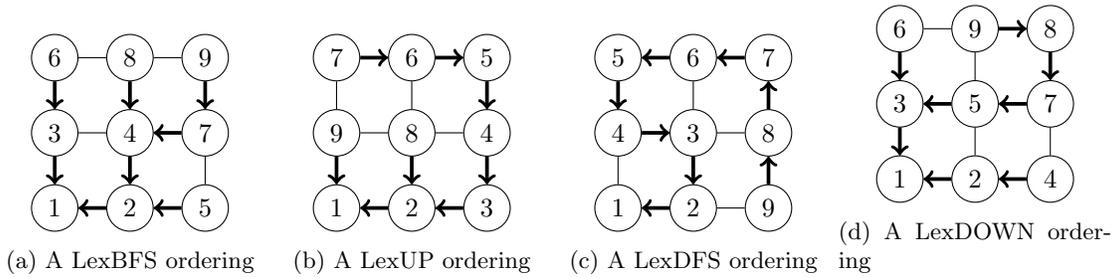
\begin{figure}
  \begin{subfigure}[b]{.24\textwidth}
    \centering
    \begin{tikzpicture}
      \node[vertex] (00) at (0,0) {1};
      \node[vertex] (10) at (1,0) {2};
      \node[vertex] (20) at (2,0) {5};
      \node[vertex] (01) at (0,1) {3};
      \node[vertex] (11) at (1,1) {4};
      \node[vertex] (21) at (2,1) {7};
      \node[vertex] (02) at (0,2) {6};
      \node[vertex] (12) at (1,2) {8};
      \node[vertex] (22) at (2,2) {9};
      
      \path (00) edge (01) edge (10);
      \path (22) edge (21) edge (12);
      \path (02) edge (12) edge (01);
      \path (20) edge (21) edge (10);
      \path (11) edge (01) edge (10) edge (12) edge (21);
      \path[->,line width=1.4pt] (10)edge (00);
      \path[->, line width=1.4pt] (01)edge (00);
      \path[->, line width=1.4pt] (11)edge (10);
      \path[->, line width=1.4pt] (20)edge (10);
      \path[->, line width=1.4pt] (02)edge (01);
      \path[->, line width=1.4pt] (21)edge (11);
      \path[->, line width=1.4pt] (12)edge  (11);
      \path[->, line width=1.4pt] (22)edge  (21);    
    \end{tikzpicture}
    \caption{A LexBFS ordering}
    \label{fig:BFS}
  \end{subfigure}
  \begin{subfigure}[b]{.24\textwidth}
    \centering
    \begin{tikzpicture}
      \node[vertex] (00) at (0,0) {1};
      \node[vertex] (10) at (1,0) {2};
      \node[vertex] (20) at (2,0) {3};
      \node[vertex] (01) at (0,1) {9};
      \node[vertex] (11) at (1,1) {8};
      \node[vertex] (21) at (2,1) {4};
      \node[vertex] (02) at (0,2) {7};
      \node[vertex] (12) at (1,2) {6};
      \node[vertex] (22) at (2,2) {5};

      \path (00) edge (01) edge (10);
      \path (22) edge (21) edge (12);
      \path (02) edge (12) edge (01);
      \path (20) edge (21) edge (10);
      \path (11) edge (01) edge (10) edge (12) edge (21);

      \path[->, line width=1.4pt] (10)edge (00);
      \path[->, line width=1.4pt] (20) edge  (10);
      \path[->, line width=1.4pt] (21) edge  (20);
      \path[->, line width=1.4pt] (22) edge  (21);
      \path[->, line width=1.4pt] (12) edge  (22);
      \path[->, line width=1.4pt] (02) edge  (12);
      \path[->, line width=1.4pt] (11) edge  (10);
      \path[->, line width=1.4pt] (01) edge  (00);
    
    \end{tikzpicture}
    \caption{A LexUP ordering}
    \label{fig:UP}
  \end{subfigure}
  \begin{subfigure}[b]{.24\textwidth}
    \centering
    \begin{tikzpicture}
      \node[vertex] (00) at (0,0) {1};
      \node[vertex] (10) at (1,0) {2};
      \node[vertex] (20) at (2,0) {9};
      \node[vertex] (01) at (0,1) {4};
      \node[vertex] (11) at (1,1) {3};
      \node[vertex] (21) at (2,1) {8};
      \node[vertex] (02) at (0,2) {5};
      \node[vertex] (12) at (1,2) {6};
      \node[vertex] (22) at (2,2) {7};

      \path (00) edge (01) edge (10); \path (22) edge (21) edge (12);
      \path (02) edge (12) edge (01); \path (20) edge (21) edge (10);
      \path (11) edge (01) edge (10) edge (12) edge (21);

      \path[->,line width=1.4pt] (10)edge (00);
      \path[->, line width=1.4pt] (11) edge  (10);
      \path[->, line width=1.4pt] (01) edge  (11);
      \path[->, line width=1.4pt] (02) edge  (01);
      \path[->, line width=1.4pt] (12) edge  (02);
      \path[->, line width=1.4pt] (22) edge  (12);
      \path[->, line width=1.4pt] (21) edge  (22);
      \path[->, line width=1.4pt] (20) edge  (21);
    
    \end{tikzpicture}
    \caption{A LexDFS ordering}
    \label{fig:DFS}
  \end{subfigure}
%
%
  \begin{subfigure}[b]{.24\textwidth}
    \centering
    \begin{tikzpicture}
      \node[vertex] (00) at (0,0) {1};
      \node[vertex] (10) at (1,0) {2};
      \node[vertex] (20) at (2,0) {4};
      \node[vertex] (01) at (0,1) {3};
      \node[vertex] (11) at (1,1) {5};
      \node[vertex] (21) at (2,1) {7};
      \node[vertex] (02) at (0,2) {6};
      \node[vertex] (12) at (1,2) {9};
      \node[vertex] (22) at (2,2) {8};

      \path (00) edge (01) edge (10); \path (22) edge (21) edge (12);
      \path (02) edge (12) edge (01); \path (20) edge (21) edge (10);
      \path (11) edge (01) edge (10) edge (12) edge (21);
      \path[->,line width=1.4pt] (10)edge (00);
      \path[->, line width=1.4pt] (01) edge  (00);
      \path[->, line width=1.4pt] (20) edge  (10);
      \path[->, line width=1.4pt] (11) edge  (01);
      \path[->, line width=1.4pt] (02) edge  (01);
      \path[->, line width=1.4pt] (21) edge  (11);
      \path[->, line width=1.4pt] (22) edge  (21);
      \path[->, line width=1.4pt] (12) edge  (22);
    
    \end{tikzpicture}
    \caption{A LexDOWN ordering}
    \label{fig:DOWN}
  \end{subfigure}
  \caption{Orderings with source in a corner}
  \label{fig:ex}
\end{figure}
Those four algorithms are now defined, as in
\cite{dusart:tel-01273352}.
\subsection{Lexicographic Breadth-First Search}
Let \index{Lexicographic Breadth-First Search}Lexicographic
Breadth-First Search (LexBFS)\index{LexBFS} be a graph search algorithm,
as in \autoref{def:search}, where Line \ref{search-def-loop-neighb}
is: \quo{append $n-i$ to the \var{\var{neighb}}'s label}.

Intuitively, at each step, the vertex $v$ is preferred to the vertex
$v'$ if the first numbered neighboor of $v$ have been numbered earlier
than the first numbered neighboor of $v'$.  If their first neighboor
are equal, then the same comparaison is done on the second
neighbor. And so on. If $v$ and $v'$ have $i$ and $i'$ numbered
neighbors respectively, with $i'<i$, and furthermore if the $i'$ first
numbered neighbors of $v$ are exactly the first $i$ numbered neighbors
of $v'$ in the same order, then $v$ is also preferred.

\autoref{fig:BFS} shows examples of LexBFS ordering.
Each arrow associates to a vertex $v$ its earliest numbered neigbhor.
Table \ref{tab:BFS} associate to each vertex $v$ its list of numbered
neighbors when $v$ was numbered. This table also associate to $v$ its
label.

\begin{table}[h]
  \centering
  \begin{tabular}{l||l|l|l|l|l|l|l|l}
    Vertex number &2&3&4&5&6&7&8&9\\
    \hline
    $v$'s  label in Figure \ref{fig:BFS}         &8& 8&76& 7& 6& 54& 53&21\\
    When $v$  is numbered,  it's numbered neighbors are&1& 1&23& 2& 3& 45& 46& 78\\
\end{tabular}
\caption{Label during the LexBFS search of Figure \ref{fig:BFS}}\label{tab:BFS}
\end{table}

\subsection{Lexicographic UP}
Let \index{Lexicographic UP}Lexicographic
UP (LexUp)\index{LexUP} be a graph search algorithm,
as in \autoref{def:search}, where Line \ref{search-def-loop-neighb}
is: \quo{append $i$ to the \var{\var{neighb}}'s label}.

Intuitively, at each step, the vertex $v$ is preferred to the vertex
$v'$ if the first numbered neighboor of $v$ have been numbered later
than the first numbered neighboor of $v'$.  If their first neighboor
are equal, then the same comparaison is done on the second vertex. And
so on. If $v$ and $v'$ have $i$ and $i'$ numbered neighbors
respectively, with $i'<i$, and furthermore if the $i'$ first numbered
neighbors of $v$ are exactly the first $i$ numbered neighbors of $v'$
in the same order, then $v$ is also preferred.

Note that this intuition is the same than for LexBFS, apart that the
word \quo{earlier} have been replaced by the word \quo{later}. It is
because, in both cases, \type{integer}s are prepended to the label. But in
the former case, the sequence of prepended numbers decrease while in
the second case it increases. Thus, the maximal numbers are added
earlier in LexBFS and later in LexUP.

Figure \ref{fig:UP} show an example of a LexBFS ordering.  Each arrow
associates to a vertex $v$ its first numbered neigbhor.  Table
\ref{tab:UP} associates to each vertex $v$ its label when $v$ was
numbered in each of those 4 examples respectively. Note that this list
is also its list of numbered neighbors.
\begin{table}[h]
  \centering
  \begin{tabular}{l||l|l|l|l|l|l|l|l}
    Vertex number &2&3&4&5&6&7&8&9\\
    \hline
    $v$'s label in Figure \ref{fig:UP} &1&2 &3& 4&5&6 &246 &187 \\
  \end{tabular}
\caption{Label during the LexUP search of Figure \ref{fig:ex}}
\label{tab:UP}
\end{table}

\subsection{Lexicographic Depth-First Search}
Let \index{Lexicographic Depth-First Search}Lexicographic Depth-First
Search (LexDFS)\index{LexBFS} be a graph search algorithm, as in
\autoref{def:search}, where Line \ref{search-def-loop-neighb} is:
\quo{prepend $i$ to the \var{\var{neighb}}'s label}.

Intuitively, at each step, the vertex $v$ is preferred to the vertex
$v'$ if the last numbered neighboor of $v$ have been numbered later
than the last numbered neighboor of $v'$.  If their last neighboor
are equal, then the same comparaison is done on the second last
neighbor. And so on. If $v$ and $v'$ have $i$ and $i'$ numbered
neighbors respectively, with $i'<i$, and furthermore if the $i'$ first
numbered neighbors of $v$ are exactly the last $i$ numbered neighbors
of $v'$ in the same order, then $v$ is also preferred.

Note that this intuition is the same than for LexUP, apart that the
word first have been replaced by the word last. Indeed the same
integers is added to the label in both cases. However, in the former
case the integer is prepended while in the latter case the integer is
appended. Hence, in both cases, neighbors with small number are
prefered. But in LexUP they must be the earliest neighbors while in
LexDFS they must be the latest neighbors.

Figure \ref{fig:DFS} show an example of a LexBFS ordering.  Each arrow
associates to a vertex $v$ its last numbered neigbhor.  Table
\ref{tab:DFS} associates to each vertex $v$ its label when $v$ was
numbered. Note that this list is also its list of numbered neighbors.
\begin{table}[h]
  \centering
  \begin{tabular}{l||l|l|l|l|l|l|l|l}
    Vertex number&2&3&4 &5 &6 &7 &8 &9 \\
    \hline
     $v$'s label  in Figure \ref{fig:DFS}  &1&2&31&4&53&6 &73 &82 \\
  \end{tabular}
\caption{Label during the LexDFS search}
\label{tab:DFS}
\end{table}

\subsection{LexDown}\label{sec:def-down}
Let \index{Lexicographic DOWN}Lexicographic DOWN\index{LexDOWN} be a
graph search algorithm, as in \autoref{def:search}, where Line
\ref{search-def-loop-neighb} is: \quo{prepend $n-i$ to the label of
\var{\var{neighb}}}.

Intuitively, at each step, the vertex $v$ is preferred to the vertex
$v'$ if the first numbered neighboor of $v$ have been numbered later
than the first numbered neighboor of $v'$.  If their first neighboor
are equal, then the same comparaison is done on the second
neighbor. And so on. If $v$ and $v'$ have $i$ and $i'$ numbered
neighbors respectively, with $i'<i$, and furthermore if the $i'$ first
numbered neighbors of $v$ are exactly the last $i$ numbered neighbors of
$v'$ in the same order, then $v$ is also preferred.

Note that this intuition is the same than for LexBFS (respectively,
LexDFS), apart that the word earlier (respectively, first) have been
replaced by the word later (respectively, last). The reason is similar
to the previous explanations.

\autoref{fig:DOWN} shows an examples of LexBFS ordering.  Each arrow
associates to a vertex $v$ its last numbered neigbhor.  Table
\ref{tab:DOWN} associates to each vertex $v$ its list of numbered
neighbors when $v$ was numbered. This table also associate to $v$ its
label.
\begin{table}[h]
  \centering
  \begin{tabular}{l||l|l|l|l|l|l|l|l}
    Vertex number &2&3&4&5&6&7&8&9\\
    \hline
    $v$'s label  in Figure \ref{fig:DOWN}             &8&8&7&67&6&45&2&134\\
    \hline
    When $v$ is numbered,  its numbered neighbors are &1&1&2&32&3&54&7&865
\end{tabular}
\caption{Labels during the LexDOWN search}
\label{tab:DOWN}
\end{table}
Note that when vertices 1 and 2 are fixed, this graph admits no other
LexBFS ordering.
 \section{LexDFS and LexDOWN}\label{sec:DFS-DOWN}

An algorithm is now given, which outputs a LexDOWN ordering in time
$\bigO{n+m\log(m)}$. Note that $m\le n^{2}$, hence
$\log(m)\le 2\log n$, thus, this algorithm is more efficient than
\autoref{def:search}. 
\begin{theorem}\label{theo:DFS}
  Let $G=(V,E,s)$ be a connex undirected \type{graph} with source $s$, with
  $n$ vertices and $m$ edges. A LexDFS ordering of $G$ can be computed
  in time $\bigO{n+m\log(m)}$.
\end{theorem}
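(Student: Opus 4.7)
The plan is to maintain a doubly-linked \type{list} $L$ of the unnumbered vertices in decreasing lex-order of their current labels, so that the next vertex to number is always $L.\param{first}$. The main invariant to preserve is that at the start of every iteration $i$, $L$ contains exactly the unnumbered vertices sorted by decreasing lex-order of their LexDFS labels.

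At iteration $i$, I would let $v$ be the first element of $L$, remove $v$, and set $\sigma[i]:=v$. Walking $v$'s adjacency list produces $U_v := N(v) \cap L$, the unnumbered neighbors of $v$. I then sort $U_v$ by their current order in $L$, remove each from its current position, and splice the batch at the head of $L$ in that sorted order. To compare $L$-positions in constant time, I equip each node of $L$ with an integer priority (larger priorities being closer to the head) and mint fresh priorities larger than every current one when splicing, with a standard amortized renumbering to keep priorities bounded by $\bigO{n^2}$.

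Invariant preservation is the key step. After the LexDFS update prepends $i$ to each label in $U_v$, the resulting labels start with $i$; since every previously prepended value is $<i$, they are lex-greater than every label not starting with $i$, so the whole batch $U_v$ must occupy the head of $L$. For any $w, w' \in U_v$, lex-comparing the new labels $[i]\cdot\text{old}(w)$ and $[i]\cdot\text{old}(w')$ reduces to comparing $\text{old}(w)$ and $\text{old}(w')$, i.e.\ to their relative order in $L$ before the update, so preserving that order at the head is exactly right. The initialization (source $s$ with label $[\infty]$ and all others empty) trivially satisfies the invariant: $s$ is first and the tied empty-label vertices can occupy the tail in any fixed order.

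Initialization costs $\bigO{n+m}$. Each iteration costs $\bigO{\card{U_v}}$ for collecting and splicing plus $\bigO{\card{U_v}\log \card{U_v}}$ for sorting. Summing, $\sum_v d(v)\log d(v) \le \left(\sum_v d(v)\right)\log m = \bigO{m\log m}$, so the total is $\bigO{n+m\log m}$. The main obstacle I foresee is the rigorous argument that "sort $U_v$ by old $L$-order and splice at the head" faithfully realizes the LexDFS update on labels of arbitrary length; a secondary obstacle is keeping the integer-priority device constant-time, which I handle by the amortized renumbering sketched above.
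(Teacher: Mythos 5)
Your proposal is correct and follows essentially the same route as the paper's proof: a doubly-linked list of unnumbered vertices kept in decreasing lexicographic order of labels, with the unnumbered neighbours of the selected vertex sorted by an integer position stamp and spliced onto the head in their old relative order, giving $\bigO{n}$ plus $\sum_v d(v)\log d(v)=\bigO{m\log m}$. The only cosmetic differences are that the paper keeps empty-labelled vertices out of the list until they acquire a label, and that your amortized renumbering of priorities is unnecessary, since only one fresh stamp is minted per edge endpoint and the paper simply observes that the counter stays bounded by $2m$.
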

An intuition of the algorithm is first given. Note that, in
\autoref{def:search}, at each iteration of the loop of Line
\ref{search:choose}, all unnumbered states must be checked. At each
iteration, this line  runs in time  $\bigO{m}$. This time can be avoided if
the list is already sorted. Since at the $i$-th iteration, at most
$\card{N(v_{\sigma(i)})}$ labels change, it suffices to sort and move
those $\bigO{\card{N(v_{\sigma(i)})}}$ elements. The sorting can be
done in time
$\bigO{\card{N(v_{\sigma(i)})}\log(\card{N(v_{\sigma(i)})})}$. Since
all of those elements must be moved to the front of the \type{list}, a
correct usage of pointers allow to move the
$\bigO{\card{N(v_{\sigma(i)})}}$ vertices in time
$\bigO{\card{N(v_{\sigma(i)})}}$.  Summing over all $i$, the times
taken by those operationsày is
$\bigO{\sum_{i=1}^{n}\card{N(v_{\sigma(i)})}\log(\card{N(v_{\sigma(i)})})=m\log(m)}$.

A simplified version of the algorithm is given as
\autoref{alg:lexDFS-Simple}. In this simplified version, \type{vertex}
is a type which contains an \type{integer} \param{order} and a
\param{label}.  \autoref{alg:lexDFS} furthermore shows exactly how to
use pointers in order to obtain a quasi-linear time.
 \begin{algorithm}[!h]\caption{Computing a LexDFS ordering-simplified}\label{alg:lexDFS-Simple}
    \KwIn{$G=(V,E,s)$ an undirected \type{graph} with a source}
    \KwOut{a LexDFS-Simple ordering $\sigma$ of the vertices of $G$}
    \lnl{DFS-Simple-init-sigma}$\sigma$: \type{array} of $n$ \type{integer}s\;
    \lnl{DFS-Simple-vertices-decl}vertices: \type{array} of $n$ elements of type \type{vertex}\;
    \lnl{DFS-Simple-init-max}\var{max}:=0\;
    \lnl{DFS-Simple-init-loop}\ForEach(\tcc*[f]{Initialization}){$i$ from 1 to $n$}{
      \lnl{DFS-Simple-init}vertices[$i$]:=\{\param{order}:=$-\infty$
      \param{label}:=[]\}\;
    }
    \lnl{init-pos-s}vertices[$s$]:=\{\param{order}:=0; \param{label}:=[$\infty$]\}\; 
    \lnl{DFS-Simple-unnumbered-init}\var{unnumbered}:=[$s$]\;
    \lnl{DFS-Simple-main-loop}\ForEach{$i$ from 1 to $n$}{
      \lnl{DFS-Simple-sigma-assign}$\sigma(i):=$\var{unnumbered}.\param{first}.\param{value}\tcc*{Selecting
      the greatest value.}
      \lnl{DFS-Simple-unnumbered-remove}remove $\sigma(i)$ from \var{unnumbered}\;
      \lnl{DFS-Simple-sort}sort the neighbors of $v_{\sigma(i)}$ in increasing order\;
       \lnl{DFS-Simple-inside-loop}\ForEach{\var{neighb}: \var{unnumbered} neighbor of $v_{\sigma(i)}$ in increasing order}{
         \lnl{DFS-Simple-if-remove}\If(\tcc*[f]{\var{neighb} must be removed from }){neighb's \param{label} is empty}{
          \lnl{DFS-Simple-remove} remove \var{neighb} from \var{unnumbered}\tcc*{\var{unnumbered} if its was in it.}
         }
        \lnl{DFS-Simple-prepend-label} prepend $i$ to neighb's \param{label}\tcc*{\var{neighb} now has the greatest label}
        \lnl{DFS-simple-last-neighb}add \var{neighb} to the front of \var{unnumbered}\;
        \lnl{DFS-Simple-incr-max} set \var{max} to \var{max}+1\;
        \lnl{DFS-Simple-max}set neighb's \param{order} to \var{max}\tcc*{and has the greatest \param{order}}
       }
    }
    \KwRet{$\sigma$}
   \end{algorithm}
\begin{proof}
  In \autoref{alg:lexDFS}, a \type{vertex} is a data structure which
  contains 4 parameters
  \begin{itemize}
  \item \param{order} : an \type{integer};
  \item \param{pos} : a \type{node} of a \type{list} of \type{integers};
  \item \param{label} : a \type{list} of \type{integer}s;
  \item \param{numbered} : a \type{Boolean};
  \end{itemize}

  \begin{algorithm}[!h]\caption{Computing a LexDFS ordering}\label{alg:lexDFS}
    \KwIn{$G=(V,E,s)$ an undirected \type{graph} with a source}
    \KwOut{a LexDFS ordering $\sigma$ of the vertices of $G$}
    \SetKwRepeat{Struct}{struct \{}{\}}
    \lnl{DFS-init-sigma}$\sigma$: \type{array} of $n$ \type{integer}s initialized to $-1$\;
    \lnl{DFS-vertices-decl}vertices: \type{array} of $n$ elements of type \type{vertex}\;
    \lnl{DFS-init-max}\var{max}:=0\;
    \lnl{DFS-init-loop}\ForEach(\tcc*[f]{Initialization}){$i$ from 1 to $n$}{
      \lnl{DFS-init}vertices[$i$]:=\{\param{order}:=$-\infty$;
      \param{label}:=[];
      \param{numbered}:=false\}\;
    }
    \lnl{DFS-unnumbered-init}\var{unnumbered}:=[$s$]\;
    \lnl{init-pos-s}vertices[$s$]:=\{order:=0; \param{pos}:=\var{unnumbered}.\param{last};
    label    :=[$\infty$]; numbered:=false\}\; 
    \lnl{DFS-main-loop}\ForEach{$i$ from 1 to $n$}{
      \lnl{DFS-sigma-assign}$\sigma(i):=$\var{unnumbered}.\param{first}.\param{value}\tcc*{Selecting
      the greatest value.}
      \lnl{DFS-unnumbered-remove}\var{unnumbered}.\param{first}.\param{remove}\;
      \lnl{DFS-numbered}vertices[$\sigma(i)$].\param{numbered}:=true\;
      \lnl{DFS-sort}sort the neighbors of $v_{\sigma(i)}$ in increasing order\;
      \lnl{DFS-inside-loop}\ForEach{\var{neighb}: \var{unnumbered} neighbor of $v_{\sigma(i)}$ in increasing order}{
        \lnl{DFS-if-remove}\If(\tcc*[f]{\var{neighb} must be removed from }){vertices[\var{neighb}].\param{label}$\ne[]$}{
          \lnl{DFS-remove}vertices[\var{neighb}].\param{pos}.\param{remove}\tcc*{\var{unnumbered} if its was in it.}
        }
        \lnl{DFS-prepend-label}vertices[\var{neighb}].\param{label}.\param{add-first}.($i$)\tcc*{\var{neighb} now has the greatest label}
        \lnl{dfs-last-neighb}\var{unnumbered}.\param{add-first}.(\var{neighb})\tcc*{hence, it goes in front of the list,}
        \lnl{DFS-pos-neighb}vertices[\var{neighb}].\param{pos}:= \var{unnumbered}.\param{first}\;
        \lnl{DFS-incr-max}\var{max}:=\var{max}+1\;
        \lnl{DFS-max}vertices[\var{neighb}].\param{order}:=\var{max}\tcc*{and has the greatest \param{order}}
      }
    }
    \KwRet{$\sigma$}
  \end{algorithm}
  Let us first prove that \autoref{alg:lexDFS} returns a lexDFS
  ordering.  For $1\le i\le n$, let $\sigma_{i}$, \var{unnumbered}$_{j}$,
  vertices$_j$ and \var{max}$_j$ be the values of those variables when the
  iteration of the loop of Line \ref{DFS-main-loop} ends, with the
  variable $i$ interpreted by $j$. Finally, let $\sigma_{0}$,
  \var{unnumbered}$_{0}$, vertices$_0$ and \var{max}$_0$ be the values of those
  variables before the first iteration of this loop.
  
  The loop invariants of this algorithm are:
  \begin{enumerate}
  \item\label{DFS-sigma} $\sigma_{j}[i]$ contains an
    element $k$ such that label$_{i}(v_{k})$ is lexicographically
    maximal, for $0<i\le j$.
  \item\label{DFS-label} vertices$_{j}[i]$.\param{label} contains the
    label of $v_{i}$, as in the $j$-th step of LexDFS.  Note that
    vertices$[i]$.\param{label} is not actually used in computation
    of the LexDFS ordering.
  \item\label{DFS-unnumbered} The variable \var{unnumbered}$_{j}$ contains
    the \type{list} of \var{unnumbered} vertices with a non-empty label. Those
    vertices appears in decreasing lexicographic order of their
    labels.
  \item\label{DFS-order-bound} If $x$ appears before $y$ in
    \var{unnumbered}$_{j}$, then
    $node_{j}[x].\param{order}>node_{j}[y].\param{order}$.  
  \item\label{DFS-pos} If \var{unnumbered}$_{j}$ contains the vertex
    $v_{i}$, then vertices$_{j}[i]$.\param{pos} is the \type{node}
    of \var{unnumbered}$_{j}$ whose value is $v_{i}$. Otherwise,
    vertices$_{j}[i]$.\param{pos} is unspecified.
  \item\label{DFS-max-min}\var{max}$_{j}$ is greater than all finite
    vertices$_j$[$i$].\param{order}.
  \item\label{DFS-max-bound}\var{max}$_{j}$ is less than the sum of the
    degree of the vertices $v$ which are numbered at the $j$-th step.
    \label{DFS-last}
  \end{enumerate}
  Let us show that, for $0\le j\le n$, the \ref{DFS-last} invariant
  are satisfied.  Invariants \ref{DFS-pos} is satisfied at each step,
  because everytime an \type{integer} $i$ is added into \var{unnumbered},
  vertices[$i$].\param{pos} is modified accordingly. The proof for
  the other invariants is by induction on $j$.

  \paragraph{}
  Let us show that, for $j=0$, the \ref{DFS-last} invariant are
  satisfied.

  Invariant \ref{DFS-sigma} holds, since there are no {integer}
  $0<i\le 0$.

  By definition of LexDFS, all labels are empty at
  initialization, apart from the one of the source. It is the case in
  this program because of Lines \ref{DFS-init} and
  \ref{init-pos-s}. Hence invariant \ref{DFS-label} is satisfied.

  Note that $v_{s}$ is the only labelled vertex and that no vertex is
  numbered.  Furthermore $s$ is the element of \var{unnumbered} because of
  Line \ref{DFS-unnumbered-init}. Hence invariant \ref{DFS-unnumbered}
  is satisfied.

  Invariant \ref{DFS-order-bound} is also trivially satisfied, since
  $s$ have the greatest order and the greates label, and all other
  orders are equal and all other label are equals.

  Invariant \ref{DFS-max-min} is trivially satisfied since
  for all $i$, vertices$_0$[$i$].\param{order}=0.

  Invariant
  \ref{DFS-max-bound} is trivially satisfied since no vertices are
  numbered at the 0-th step.

\paragraph{}

  Let $0< j\le n$. Let us now assume that the \ref{DFS-last}
  invariants holds at step $j-1$, and let us prove that it holds for
  $j$.

  Since Invariant \ref{DFS-sigma} holds at step $j-1$, it clearly
  holds at step $j$ for all $i< j$. It remains to consider the case
  $i=j$. By invariant \ref{DFS-unnumbered}, \var{unnumbered}$_j$ contains
  the \type{list} of \var{unnumbered} labelled vertices at step $j$, in decreasing
  lexicographic order of their labels. Hence Line
  \ref{DFS-sigma-declare} correctly assigns to $\sigma[i]$ a vertex
  $w$ such that label$_{j}(w)$ has a maximal label. Thus, Invariant
  \ref{DFS-sigma} holds at step $j$.

  Invariant \ref{DFS-label} clearly remains true since the updating of
  the label is exactly the one of the definition of the LexDFS
  algorithm.
  
  At the $j$-th step, the list of \var{unnumbered} vertices with a non-empty
  label contains, in this order:
  \begin{itemize}
  \item The neighbors of $v_{\sigma(j)}$, which are \var{unnumbered} and
    have a non-empty label at step $j-1$. The order, according to
    their labels, are in the same order in both lists.
  \item The vertices which are neither $v_{\sigma(j)}$ nor its
    neighbors, which are \var{unnumbered} and have a non-empty label at step
    $j-1$. The order, according to their labels, are in the same order
    in both \type{list}s.
  \item The neighbors of $v_{\sigma(j)}$ which are \var{unnumbered} and have
    an empty label at step $j-1$. 
  \end{itemize}
  Thus, according to invariant \ref{DFS-unnumbered}, \var{unnumbered}$_{j}$
  must contains, in the following order:
  \begin{itemize}
  \item the elements of \var{unnumbered}$_{j-1}$  which are neighbors of
    $v_{\sigma(j)}$, in the same order,
  \item the elements of \var{unnumbered}$_{j-1}$ which are neither neighbors
    of $v_{\sigma(j)}$ nor $j$, in the same order,
  \item the unlabelled neighbors of $v_{\sigma(j)}$, in an arbitrary
    order.
  \end{itemize}
  This is indeed the value of \var{unnumbered}$_{j}$, because of Lines
  \ref{DFS-remove} and \ref{dfs-last-neighb}.  Hence invariant
  \ref{DFS-unnumbered} holds at step $j$.

  Since each time an element is moved to the front of
  \var{unnumbered}, its \param{order} is greater than any
  \param{order} presently assigned \ref{DFS-pos}, its \param{order}
  is greater than any previously assigned \param{order}, then
  Invariant \ref{DFS-order-bound} holds.

  Invariant \ref{DFS-max-min} clearly holds since the \param{order}s
  are assigned in increasing order, and since, each time an
  \param{order} is assigned, \var{max} is assigned to be its
  predecessor.

  It is easy to see that
  $\var{max}_j\le \var{max}_{j-1}+\card{N(\sigma(j))}$. Hence
  invariant \ref{DFS-max-bound} is true at step $j$.
  
  Since the invariants are satisfied at each steps, by
  \ref{DFS-sigma}, at the end of the loop, $\sigma$ contains a
  LexDFS ordering of $G$. Hence the algorithm indeed returns a
  LexDFS ordering of $G$.
  \paragraph{}
  Let us now consider the computation time.  The code of Lines
  \ref{DFS-init-sigma}, \ref{DFS-vertices-decl}, \ref{DFS-init-max},
  \ref{init-pos-s} and \ref{DFS-unnumbered-init} are executed exactly
  once, and runs in time $\bigO{n}$. Hence their cost is $\bigO{n}$.

  Lines \ref{DFS-init}, \ref{DFS-sigma-assign},
  \ref{DFS-unnumbered-remove},  \ref{DFS-numbered}, are executed $n$
  times and runs in constant time. Hence their cost is $\bigO{n}$.
  
  Line \ref{DFS-sort} is executed once for each vertex $v_{i}$. And
  for each vertex $v_{i}$, it runs in time
  $\bigO{\card{N(v_{i})}\log(\card{N(v_{i})})}$. Hence the total cost
  of this line is
  $\bigO{\sum_{i=1}^{n}\card{N(v_{i})}\log(\card{N(v_{i})})}=\bigO{m\log
    m}$.

  Lines \ref{DFS-prepend-label} to \ref{DFS-max} are executed once by
  edge, and executed in constant time. Hence their cost is $\bigO{m}$.

  Finally, the total execution time is $\bigO{n+m\log(m)}$.
\end{proof}
Note that the \param{order}s are either infinite, or \type{integer}s between
0 and $2m$. Hence it is acceptable to assume that comparison of two
\param{order} parameters can be done in constant time.

\paragraph{LexDOWN}
As stated in \autoref{sec:def-down}, LexDOWN is similar to LexDFS. It
is now considered.
\begin{theorem}
  Let $G=(V,E,s)$ be a connex undirected \type{graph} with source $s$, with
  $n$ vertices and $m$ edges. A LexDOWN ordering of $G$ can be
  computed in time $\bigO{n+m\log(m)}$.
\end{theorem}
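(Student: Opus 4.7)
The plan is to adapt Algorithm \ref{alg:lexDFS} by exploiting the close analogy between LexDFS and LexDOWN. The only difference between the two searches is that, at step $i$, LexDFS prepends $i$ while LexDOWN prepends $n-i$ to each neighbor's label. Since $(n-i)_{i=1,\dots,n}$ is strictly decreasing, the integer prepended at step $j$ in LexDOWN is \emph{smaller} than any integer previously prepended, whereas in LexDFS it is larger. In particular, the neighbors of $v_{\sigma(j)}$ that are touched at step $j$ end up with labels whose first element is smaller than the first element of the label of any other vertex in unnumbered. Hence, where LexDFS moves the touched neighbors to the \emph{front} of unnumbered, LexDOWN must move them to the \emph{back}.

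The algorithm I would write has exactly the same structure as Algorithm \ref{alg:lexDFS}, but with four local modifications. First, in the label update (Line \ref{DFS-prepend-label}) prepend $n-i$ instead of $i$. Second, the touched neighbor is appended with \param{add-last} instead of \param{add-first} (Line \ref{dfs-last-neighb}), and its \param{pos} becomes \var{unnumbered}.\param{last}. Third, the neighbors of $v_{\sigma(j)}$ are iterated in \emph{decreasing} order of their \param{order} (with the unlabelled ones, whose order is $-\infty$, handled last in arbitrary order) so that, after every neighbor has been reinserted at the back, the labeled ones appear in decreasing order of their old labels and the previously unlabelled ones appear behind them. Fourth, the counter \var{max} is replaced by a counter \var{min} that is decremented at each new assignment; $s$ receives the largest possible \param{order} so that the invariant ``higher \param{order} means closer to the front of unnumbered'' is preserved.

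The correctness proof follows the same seven loop invariants as in Theorem~\ref{theo:DFS}, with the words \emph{front} and \emph{back} swapped and the direction of the inequalities on \param{order} and the counter reversed. The justification that \var{unnumbered}$_j$ has the correct form is the structural observation above: it is obtained from \var{unnumbered}$_{j-1}$ by keeping the non-neighbors in their original order at the front, appending the labeled neighbors in their (preserved) relative order, and then appending the previously unlabelled neighbors. The time analysis is identical to the LexDFS case: sorting the neighbors of $v_{\sigma(j)}$ at each step costs $O(\card{N(v_{\sigma(j)})}\log\card{N(v_{\sigma(j)})})$, summing to $O(m\log m)$; every other line is executed either $O(n)$ or $O(m)$ times in constant time. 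The total running time is therefore $O(n+m\log m)$.

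The main obstacle is a bookkeeping one rather than a conceptual one: one has to verify carefully, line by line, that flipping ``front $\leftrightarrow$ back'' and ``increasing $\leftrightarrow$ decreasing'' in the \param{order} maintenance does not break any of the seven invariants, and in particular that the invariant $\text{\var{min}} \le \min_i \text{vertices}[i].\param{order}$ (the analog of Invariant~\ref{DFS-max-min}) is preserved each time a new \param{order} is assigned. No new algorithmic idea beyond Theorem~\ref{theo:DFS} is needed.
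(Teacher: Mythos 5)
Your proposal is correct and takes the same route as the paper: reuse \autoref{alg:lexDFS}, prepend $n-i$ instead of $i$ at Line \ref{DFS-prepend-label}, append the touched neighbours to the rear of \var{unnumbered} instead of the front at Line \ref{dfs-last-neighb}, run the counter downwards at Line \ref{DFS-incr-max}, and recheck the seven loop invariants of \autoref{theo:DFS} with the inequalities on \param{order} reversed where needed. The one point where you genuinely depart from the paper is your third modification, reversing the order in which the neighbours of $v_{\sigma(i)}$ are processed (largest \param{order} first, previously unlabelled ones last). The paper lists only the other three changes and asserts that the rest of the proof carries over verbatim, but that cannot be taken literally: if the neighbours are still iterated in increasing order of \param{order} while each is now appended to the rear, then the previously labelled neighbours end up behind the untouched vertices in \emph{increasing} order of their old labels, and the previously unlabelled neighbours --- whose new label $[n-i]$ is a proper prefix of, hence lexicographically smaller than, the labels $[n-i,\dots]$ of the labelled ones --- land in front of them rather than at the very end; both facts break Invariant \ref{DFS-unnumbered}. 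Your reversal of the iteration direction is exactly the repair that restores this invariant, so your write-up is on this point more careful than the paper's. The complexity analysis is unchanged and matches the paper's $\bigO{n+m\log m}$ bound.
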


\begin{proof}
  The algorithm to compute a LexDOWN ordering is \autoref{alg:lexDFS},
  with the three following changes:
  \begin{itemize}
  \item Line \ref{DFS-prepend-label} is tranformed into
    \quo{vertices[\var{neighb}].\param{label}.\param{prepend}.($n- i$)},
  \item Line \ref{dfs-last-neighb} is transformed into
    \quo{\var{unnumbered}.\param{add-last}.(\var{neighb});} and
  \item Line \ref{DFS-incr-max} is transformed into \quo{\var{max}:=\var{max}-1;}.
  \end{itemize}
  Invariant \ref{DFS-max-min} must be changed to \quo{\var{max}$_{j}$ is
    smaller than all finite vertices$_j$[$i$].\param{order}}, and
  \ref{DFS-max-bound} must be changed to \quo{|\var{max}$_{j}$| is less than
    the sum of the degree of the vertices $v$ which are numbered at
    the $j$-th step}.  Apart from those changes, the proof of this
  theorem is exactly the same than the proof of \autoref{theo:DFS}.
\end{proof}
\section{Efficient LexBFS and LexUP}\label{sec:BFS-UP}
In this section, it is shown that a LexUP ordering can be computed in
linear time. The algorithm is very similar to the algorithm for
efficiently computing LexBFS.

A simplified version of the linear time algorithm which computes a
LexBFS ordering is recalled as \autoref{alg:LexBFS-simple}.  This
algorithm keeps a \type{list}, \var{unnumbered}, which contains all
vertices, with a non-empty label, in decreasing order according to
their label. More precisely, all (indices of) vertices with the same
non-empty label belong to a \type{set}, and \var{unnumbered} is a
\type{list} of \type{set}s. The \type{set}s are also encoded as \type{list}s. When a
vertex $v_{i}$ is numbered, the label of its neighbors
increases. However, it does not increase enough to become greater than
labels which used to be greater than it.  Hence all neighbors
belonging to the same \type{set} $s$ are moved to a new \type{set}
$s'$ placed before $s$. As soon as a \type{set} is empty, it is
removed from the list. Each vertex $v$ is moved at most $\card{N(v)}$
times in the list.

A correct usage of pointers, as shown in \autoref{alg:LexBFS}, allows
to move indices from the previous \type{set} to the new \type{set} in
constant time.  Hence, the algorithm runs in time $\bigO{n+m}$.  In
this algorithm, a \type{set} is a data-structure with three
parameters:
\begin{itemize}
\item  \param{pos}: a \type{node} of a \type{list} of \type{set}s,
\item  \param{edited}: an \type{integer} and
\item  \param{elements}: a \type{list} of \type{integer}s.
\end{itemize}
And a \type{vertex} is a data-structure with four parameters:
\begin{itemize}
\item \param{pos}: a \type{node} of a \type{list} of \var{integer}s,
\item \param{numbered}: a \type{Boolean},
\item \param{label}: a \type{list} of \type{integer}s and
\item \param{set}: a \type{set}.
\end{itemize}

\begin{algorithm}[!h]
  \caption{Efficient computation of a LexBFS - simplified}\label{alg:LexBFS-simple}
  \KwIn{An undirected \type{graph} $G=(V,E)$}
  \KwOut{an ordering $\sigma$ of the vertices of $G$}
  \SetKwRepeat{Struct}{struct \{}{\}}
  vertices: \type{array} of $n$ elements of type \type{vertex}\;
  \lnl{DFS-loop-init}\ForEach(\tcc*[f]{Initialization}){$i$ from 1 to $n$, distinct from $s$}{
    $v_{i}$'s \param{label} is set to []\;
  }

  set\_s is set to [$s$]\;
  $s$'s label is set to $[\infty]$\;

  \var{unnumbered}:=[set\_s]\;

  \ForEach{$i$ from $1$ to $n$}{
    \var{greatest\_set}:= the first element of \var{unnumbered}\;
    \lnl{DFS-sigma-declare}$\sigma[i]$:= any element of \var{greatest\_set}\tcc*{Selecting  a greatest vertex}
    Remove this element from \var{greatest\_set}\tcc*{and  removing it from the list.}
    If \var{greatest\_set} is empty, remove it from \var{unnumbered}\;
    \ForEach{$\text{\var{neighb}}\in N(v_{\sigma(i)})$, \var{unnumbered}}{
      \eIf{\var{neighb}'s \param{label} is not empty}{
        \eIf{no vertices from \var{neighb}'s \param{set} have been seen for this value of $i$}{
          new\_set is set to the []\;
          add \var{new\_set} before \var{neighb}'s \param{set}\;
        }{
          set \var{new\_set} to the set preceding \var{neighb}'s \param{set}\;
        }
        If \var{neighb}'s \param{set} is a singleton, remove this set from \var{unnumbered}\;
      }(\tcc*[f]{If \var{neighb}'s label is empty}){
        \If{No ununlabelled neighbor have been seen for this value of $i$}{
          Set \var{new\_set} to a new set\;
          Add \var{new\_set} to the rear of \var{unnumbered}\;
        }{
          Set \var{new\_set} to the last set of \var{unnumbered}\;
        }
      }
      Move \var{neighb} to \var{new\_set}\;
      \lnl{BFS-simple-change-label} append $n-i$ to \var{neighb}'s \param{label}
    }
  }
  \KwRet{$\sigma$}\;  
\end{algorithm}
\begin{algorithm}[!h]
  \caption{Efficient computation of a LexUP}\label{alg:LexBFS}
  \KwIn{An undirected \type{graph} $G=(V,E)$}
  \KwOut{an ordering $\sigma$ of the vertices of $G$}
  \SetKwRepeat{Struct}{struct \{}{\}}
  vertices: \type{array} of $n$ elements of type \type{vertex}\;
  unlabelled\_edited:=0\;
  \lnl{DFS-loop-init}\ForEach(\tcc*[f]{Initialization}){$i$ from 1 to $n$, distinct from $s$}{
    vertices[$i$]:=\{\param{numbered}:=false, \param{label}:=[]\}\;
  }

  set\_s:=\{\param{pos}:=set\_s;\param{edited}:=0;\param{elements}:=[$s$]\}\;
  vertices[$s$]:=\{\param{pos}:=set\_s.\param{elements}.\param{first},\param{numbered}:=false,
  \param{label}:=[$\infty$];\type{set}:=set\_s\}\;

  \var{unnumbered}:=[set\_s]\;

  \ForEach{$i$ from $1$ to $n$}{
    \var{greatest\_set}:=\var{unnumbered}.\param{first}.\param{value}\;
    \lnl{DFS-sigma-declare}$\sigma[i]:=$\var{greatest\_set}.\param{elements}.\param{first}.\param{value}\tcc*{Selecting
      a greatest vertex}
    \var{greatest\_set}.\param{elements}.\param{first}.\param{remove}\tcc*{and  removing it from the list.}
    vertices[$\sigma(i)$].\param{numbered}:=true\;
    \If{\var{greatest\_set}.\param{elements}=[]}{
      \var{greatest\_set}.\param{pos}.\param{remove}\;
    }
    \ForEach{$\text{\var{neighb}}\in N(v_{\sigma(i)})$, \var{unnumbered}}{
\eIf(\tcc*[f]{If the neighbor's label is not empty}){vertices[\var{neighb}].\param{label}$\ne[]$}{
        \eIf(\tcc*[f]{no neighbors with the same label have been
          seen: a new set must be created before the current one.}){vertices[\var{neighb}].\type{set}.\param{edited}<$i$}{
          vertices[\var{neighb}].\type{set}.\param{edited}:=$i$\;
          \var{new\_set}:=\{\param{edited}:=$i$; \param{elements}:=[]\}\;
          \lnl{DFS-create-set}vertices[\var{neighb}].\type{set}.\param{pos}.\param{add-before}(\var{new\_set})\;
          \lnl{DFS-new-set-before}\var{new\_set}.\param{pos}:= vertices[\var{neighb}].\type{set}.\param{prec}\;
        }(\tcc*[f]{A neighbor with the same label have already been seen}){
          \var{new\_set}:=vertices[\var{neighb}].\type{set}.\param{pos}.\param{prec}\;
        }
        vertices[\var{neighb}].\param{pos}.\param{remove}\;
        \If{vertices[\var{neighb}].\type{set}.\param{elements}=[]}{
          vertices[\var{neighb}].\type{set}.\param{remove}\;
        }
      }(\tcc*[f]{If \var{neighb}'s label is empty}){
        \eIf(\tcc*[f]{No unlabelled neighboors have been considered yet.}){unlabelled\_edited<$i$}{
          unlabelled\_edited:=$i$\;
          \var{new\_set}:=\{\param{edited}:=$i$; \param{elements}:=[]\}\;
          \lnl{BFS:new-set-unlabelled}\var{unnumbered}.\param{add-last}(\var{new\_set})\;
          \var{new\_set}.\param{pos}:= \var{unnumbered}.\param{last}\;
        }{
          \var{new\_set}:= \var{unnumbered}.\param{last}.\param{value}\;
        }
      }
      \var{new\_set}.\param{elements}.\param{add-last}(\var{neighb})\tcc*{Moving
      the neighbor}
      vertices[\var{neighb}].\type{set}:=\var{new\_set}\;
      vertices[\var{neighb}].\param{pos}:=\var{new\_set}.\param{elements}.\param{last}\;
      \lnl{BFS-change-label}vertices[\var{neighb}].\param{label}.\param{add-last}($n-i$)\tcc*{Updating
        the label of  \var{neighb}}
    }
  }
  \KwRet{$\sigma$}\;  
\end{algorithm}

Note that if a vertex $v$ have an empty label, it has the
lexicographically smallest label. Hence, when a first element is added
to the label of $v$, this vertex moves to the second least set (which
may become the least set if there remains no more vertex with an empty
label). Indeed, the first element of $v$'s label is $n-i$. And $n-i$
is smaller than the first element of the label of all other vertices
$w$ with non-empty label.

The preceding remark leads to the main difference between LexBFS and
LexUP.  In LexUP, the element added is $i$, and not $n-i$. Hence the
first element of this vertex $v$ is $i$. Hence, it is greatest than
all the first element of all other vertices $w$ with a non-empty
label. Hence, $v$ moves to the greatest set. Therefore, to transform
\autoref{alg:LexBFS} 
into an algorithm which computes a  LexUP ordering, it suffices to do
the following change:
\begin{itemize}
\item  Line \ref{BFS:new-set-unlabelled} must be modified to
  \quo{\var{unnumbered}.\param{add-first}(\var{new\_set});}. 
\item Line \ref{BFS-change-label} must be changed to
  \quo{vertices[\var{neighb}].\param{label}.\param{add-last}($i$);}.
\end{itemize}

The proof that \autoref{alg:lexDFS} computes a LexDFS ordering is
similar to the proof that \autoref{alg:LexBFS} computes a LexBFS
ordering.

Note that, if \var{unlabelled} was not restricted to contains only
labelled vertices, the algorithm would still be correct for
LexBFS. Furthermore, the algorithm would be be shorter. However, the
algorithm will not be correct anymore for LexUP.

\section{Conclusion}
In this paper, it has been proven that a LexUP ordering can be
computed in linear time and that a LexDOWN ordering and a LexDFS
ordering can be computed in time $\bigO{n+m\log m}$.

The author thanks Michel Habib, who introduced this problem to him
during his Graph Theory Lectures.

\bibliographystyle{alpha}
\bibliography{../fo}

\printindex{}

\end{document}